
\documentclass[12pt]{article}
\usepackage{amsmath}
\usepackage{amssymb}
\usepackage{amsthm}
\usepackage{fullpage}
\usepackage[utf8]{inputenc}
\usepackage{mathtools}
\usepackage{url}
\usepackage[usenames,svgnames]{xcolor}
\usepackage{cite}

\usepackage[pagebackref=false]{hyperref} 
\usepackage[all]{hypcap} 

\theoremstyle{plain}
\newtheorem{theorem}{Theorem}[section]

\theoremstyle{definition}

\newtheorem{remark}[theorem]{Remark}

\numberwithin{equation}{section} 

\hypersetup{
  breaklinks,colorlinks,
  citecolor=Green,
  linkcolor=BlueViolet,
  urlcolor=DarkBlue,
  pdftitle={2D Toda tau-functions as combinatorial generating functions},
  pdfauthor={Mathieu Guay-Paquet and John Harnad},
}

\DeclareMathOperator{\tr}{tr}

\DeclareMathOperator{\cyc}{cyc}
\DeclareMathOperator{\cont}{cont}

\DeclareMathOperator{\ch}{ch}
\DeclareMathOperator{\ev}{ev}
\DeclareMathOperator{\GL}{GL}
\newcommand{\id}{\mathrm{Id}}
\newcommand{\arxiv}[1]{\href{http://arxiv.org/abs/#1}{arXiv:{#1}}}

\DeclarePairedDelimiter{\abs}{|}{|}
\DeclarePairedDelimiter{\set}{\{}{\}}
\DeclarePairedDelimiter{\multiset}{\{}{\}}
\DeclarePairedDelimiter{\bra}{\langle}{|}
\DeclarePairedDelimiter{\ket}{|}{\rangle}
\DeclarePairedDelimiter{\no}{:}{:}
\DeclarePairedDelimiter{\paren}{(}{)}

\def\be{\begin{equation}}
\def\ee{\end{equation}}

\def\bea{\begin{eqnarray}}
\def\eea{\end{eqnarray}}

\def\bt{\begin{theorem}}
\def\et{\end{theorem}}

\def\ra{\rightarrow}
\def\ss{\subset}
\def\deq{\coloneqq}

\def\br{\begin{remark}\small}
\def\er{\end{remark}}


\def\&{&{\hskip -20pt}}

\def\AA{\mathcal{A}}

\def\FF{\mathcal{F}}
\def\JJ{\mathcal{J}}

\def\II{\mathcal{I}}

\def\OO{\mathcal{O}}

\def\ZZ{\mathcal{Z}}

\def\Cb{\mathbf{C}}
\def\Ib{\mathbf{I}}

\def\Nb{\mathbf{N}}
\def\Pb{\mathbf{P}}

\def\Zb{\mathbf{Z}}

\def\sb{\mathbf{s}}
\def\tb{\mathbf{t}}

\def\xb{\mathbf{x}}
\def\yb{\mathbf{y}}
\def\zb{\mathbf{z }}

\def\grF{\mathfrak{F}}

\hyphenation{applying}
\hyphenation{matrix}
\hyphenation{product}
\hyphenation{proposition}
\hyphenation{without}
\hyphenation{Riemann}
\begin{document}
\baselineskip 16pt
\medskip
\begin{center}
\begin{Large}\fontfamily{cmss}
\fontsize{17pt}{27pt}
\selectfont
\textbf{2D Toda $\tau$-functions as combinatorial generating functions}\footnote{Work supported by the Natural Sciences and Engineering Research Council of Canada (NSERC) and the Fonds de recherche du Qu\'ebec -- Nature et technologies (FRQNT).\\
AMS Mathematics Subject Classifications (2010):  5, 14, 20, 51.  \\
Keywords: Hurwitz numbers, tau functions, combinatorics, branched coverings, Cayley graph.}
\end{Large}\\
\bigskip
\begin{large} {Mathieu Guay-Paquet}$^{1}$ and {J. Harnad}$^{2,3}$
 \end{large}
\\
\bigskip
\begin{small}
$^{1}${ \em Universit\'e du Qu\'ebec \`a Montr\'eal\\
201 Av du Pr\'esident-Kennedy,
Montr\'eal QC, \ Canada H2X~3Y7 \\
email: mathieu.guaypaquet@lacim.ca} \\
\smallskip
$^{2}${ \em Centre de recherches math\'ematiques,
Universit\'e de Montr\'eal\\ C.~P.~6128, succ. centre ville,
Montr\'eal,
QC, Canada H3C 3J7 \\ e-mail: harnad@crm.umontreal.ca} \\
\smallskip
$^{3}${ \em Department of Mathematics and
Statistics, Concordia University\\ 7141 Sherbrooke W.,
Montr\'eal, QC
Canada H4B 1R6} \\
\end{small}
\end{center}
\bigskip

\begin{abstract}
Two methods of constructing 2D Toda $\tau$-functions that are generating functions for certain geometrical invariants of a combinatorial nature are related. The first involves generation of paths in the Cayley graph of the symmetric group $S_n$ by multiplication of the conjugacy class sums $C_\lambda \in \Cb[S_n]$ in the group algebra by elements of an abelian group of central elements. Extending the characteristic map to the tensor product $\Cb[S_n]\otimes\Cb[S_n]$ leads to double expansions in terms of power sum symmetric functions, in which the coefficients count the number of such paths. Applying the same map to sums over the orthogonal idempotents leads to diagonal double Schur function expansions that are identified as $\tau$-functions of hypergeometric type. The second method is the standard construction of $\tau$-functions as vacuum state matrix elements of products of vertex operators in a fermionic Fock space with elements of the abelian group of \emph{convolution symmetries}. A homomorphism between these two group actions is derived and shown to be intertwined by the characteristic map composed with fermionization. Applications include Okounkov's generating function for double Hurwitz numbers, which count branched coverings of the Riemann sphere with nonminimal branching at two points, and various analogous combinatorial counting functions.
\end{abstract}

\section{Introduction}

Many of the known generating functions for various combinatorial invariants related to Riemann surfaces have been shown to be KP $\tau$-functions, and hence to satisfy the infinite set of Hirota bilinear equations defining the KP hierarchy, or some reduction thereof.
These include the Kontsevich matrix integral \cite{Ko}, which is a KdV $\tau$-function, the generator for Hodge invariants \cite{Ka}, the matrix integrals that generate single Hurwitz numbers \cite{P, MoSh, BEMS, EO}, and the ones for Belyi curves 
and {\em dessins d'enfants} \cite{Z, KZ, AC1}. Other generating functions are known to be $\tau$-functions of the 2D Toda hierarchy, some of which are also representable as matrix integrals. Examples are the Itzykson--Zuber 2-matrix integral \cite{IZ, ZZ}, which generates the enumeration of ribbon graphs, Okounkov's generating function for double Hurwitz numbers, counting branched covers of the Riemann sphere with fixed nonminimal branching at a pair of specified points \cite{Ok}, and the Harish-Chandra--Itzykson--Zuber (HCIZ) integral \cite{HC, IZ}, which generates the monotone double Hurwitz numbers \cite{GGN1}.

The purpose of this work is to relate two different methods of constructing 2D Toda $\tau$-functions \cite{Ta, Takeb, UTa} as generating functions for geometrical-topological invariants that have combinatorial interpretations involving counting of paths in the symmetric group.
These include the double Hurwitz numbers \cite{Ok}, which may be viewed equivalently as counting paths in the Cayley graph from one conjugacy class to another, the monotone double Hurwitz numbers \cite{GGN1}, generated by the HCIZ integral in the $N \to \infty$ limit, which count weakly monotone paths, and the mixed double Hurwitz numbers \cite{GGN2}, which count a combination of both.
To these we add a new family defined by matrix integrals \cite[Appendix~A]{HO3} that are variants of the HCIZ integral, which count combinations of weakly monotone and strictly monotone paths.
In each case, the generating function can be interpreted as a $\tau$-function of the 2D Toda integrable hierarchy that is of \emph{hypergeometric  type} \cite{HO1, HO2, HO3, OS}.
The first method is based on combining Frobenius' characteristic map, from the  center $\ZZ(\Cb[S_n])$
 of the group algebra $\Cb[S_n]$ to the algebra $\Lambda$ of symmetric functions, with automorphisms of $\ZZ(\Cb[S_n])$ defined by multiplication by elements of a certain abelian group within $\ZZ(\Cb[S_n])$.
The second is based on the usual construction of $\tau$-functions \cite{HO2, HO3, OS, Takeb} as vacuum state matrix elements of products of vertex operators and operators from the Clifford group acting on a fermionic Fock space $\FF$.

Under the characteristic map, extended to $\Cb[S_n] \otimes \Cb[S_n]$, the sum $\sum_{g\in S_n} n! \, g \otimes g$ over all diagonal elements maps to the diagonal double Schur function expansion given by the Cauchy-Littlewood formula or, equivalently, to a diagonal sum of products of the power sum symmetric functions.
This may be interpreted as the restriction of the vacuum 2D Toda $\tau$-function to flow variables given by power sums.
Certain homomorphisms of the group algebra, defined by multiplication by central elements consisting of exponentials of linear combinations of power sums in the special set of commuting elements $\set{\JJ_1, \JJ_2, \JJ_3, \ldots}$ introduced by Jucys \cite{Ju} and Murphy \cite{Mu}, give rise to a ``twisting'' of the expansions in symmetric functions which, depending on the choice of the specific element, produce $\tau$-functions of hypergeometric type \cite{HO2, HO3, OS} that may be interpreted as combinatorial generating functions.
The usual way to construct $\tau$-functions of this type is by evaluating the vacuum state matrix elements with a group element that is diagonal in the standard fermionic basis. The abelian group of such diagonal elements is identified as the group $\hat{C}$ of \emph{convolution symmetries} \cite{HO4}.

In \autoref{sec:intertwining} we define a homomorphism $\II \colon \AA_P \to \hat{C}$ from the group $\AA_P$ of central elements of the form $\set{ e^{\sum_{i=0}^\infty t_i P_i} }$, where the $P_i$'s are the power sums in the Jucys-Murphy elements, to the group $\hat{C}$ of convolution symmetries. The elements of $\AA_P$ act on $\ZZ(\Cb[S_n])$ by multiplication, and are diagonal in the basis of orthogonal idempotents $\set{F_\lambda}$, labelled by partitions $\lambda$ corresponding to irreducible representations.
The elements of $\hat{C}$ act on $\FF$ and similarly are diagonal in the standard orthonormal basis $\set{\ket{\lambda; N}}$ within each charge $N$ sector $\FF_N \ss \FF$. Composing the characteristic map with the one defining the Bose--Fermi equivalence \cite{DJKM} gives an injection $\grF_n \colon \ZZ(\Cb[S_n]) \to \FF$ of the  center of the group algebra into the fermionic Fock space that maps the basis of orthogonal idempotents $\set{F_\lambda}$ to the orthonormal basis $\set{\ket{\lambda; 0}}$.
The main result, stated in \autoref{intertwining_homomorphism}, is that this map intertwines the action of the group $\AA_P$ on $\ZZ(\Cb[S_n])$ with that of $\hat{C}$ on $\FF$.

The action of $\AA_P$ on $\ZZ(\Cb[S_n])$, when expressed in another basis $\set{C_\lambda}$ consisting of the sums over elements of the conjugacy class with cycle type $\lambda$, provides combinatorial coefficients that count paths in the Cayley graph of $S_n$, starting from an element in the conjugacy class with cycle type $\lambda$ and ending on one with type $\mu$.
These are just the matrix elements of the $\AA_P$ group element in the $\set{C_\lambda}$ basis.
The image of these basis elements under the characteristic map are, up to normalization, the power sum symmetric functions $P_\lambda$.
Applying an element of $\AA_P$ to the diagonal sum $\sum_{g\in S_n} n! \, g \otimes g$ introduces a ``twist'' that is interpretable as a sum over various classes of paths in the Cayley graph.
Applying the map $\ch \otimes \ch$ to this new element provides a double sum over the power sum symmetric functions $P_\lambda([\xb]) P_\mu([\yb])$, with coefficients given by the matrix elements of $\AA_P$ which count the number of such paths or, equivalently, a double Schur function expansion of a $\tau$-function of hypergeometric type, corresponding to a specific element of the group $\hat{C}$ of convolution symmetries. This can be viewed as a method for constructing identities between double sums over the power sum symmetric functions and diagonal double Schur function expansions without involving the usual sums over irreducible characters of $S_n$.

Several examples of this construction are provided in \autoref{sec:examples}, starting with the generating function
for the double Hurwitz numbers first studied by Okounkov \cite{Ok}.
In this case, the convolution group element is given by an elliptic $\theta$-function.
In the case of weakly monotone double Hurwitz numbers, which count paths in the Cayley graph between a pair
of elements in given conjugacy classes consisting of sequences of transpositions that are weakly monotonically increasing, it corresponds to convolution with the exponential function. Choosing the expansion parameter that counts the number of steps in a path as $z=-1/N$, the resulting sequence of 2D Toda $\tau$-functions is just the large $N$ limit of the HCIZ matrix integral \cite{GGN1, GGN2}. The mixed double Hurwitz numbers, consisting of a combination of weakly monotonically increasing sequences and unordered ones,  are obtained by multiplying the two $\AA_P$ group elements.

A fourth example is introduced, in which the generating 2D Toda $\tau$-function is also interpretable as a matrix integral analogous to the HCIZ integral, but with the exponential trace product coupling replaced by a noninteger power of the characteristic polynomial of the product,
as discussed in  \cite[Appendix~A]{HO3}. This is shown to be a  generating function for  paths in the Cayley graph consisting of a sequence of weakly monotonically increasing transpositions followed by a sequence of strongly monotonically increasing ones. The identification
of the combinatorial meaning of such matrix integrals is a novel feature. This type of coupling has also been considered in the study of the spectral statistics of 2-matrix models \cite{BPF, BGS}. A final case considered here is  the family of hypergeometric $\tau$-functions introduced
recently in \cite{AMMN} as examples of  $\tau$-functions having a similar structure to the Hurwitz generating functions.
 These are shown to be generating functions for the number of multiple sequences of strictly monotonically increasing paths in the Cayley graph connecting elements in a  pair of conjugacy classes. The combinatorial significance of this family of $\tau$-unctions has never
 previously been derived. 
 
 In a sequel to this work \cite{HO5}, a broader class of generating functions is considered,
 in which the underlying ``twist'' homomorphism is generated by an arbitrary rational function, leading to Hurwitz
 numbers of multiparametric type, grouped into ``coloured'' branch points, in which the total ramification
 in each group is fixed, and a signed counting is introduced, corresponding to the parity of the number of branch
 points within each group. A further generalization consists of assigning an arbitrary $1$-parameter family of weightings 
 to the branched coverings, and to the paths in the Cayley graph. This  includes, as special cases, all previously 
 studied examples  and allows for an infinite variety of further classes of weighted Hurwitz numbers, including a natural notion of {\em quantum}
 Hurwotz numbers, in which the weighting may be related to the energy distribution of a quantum Bosonic gas \cite{GH}
 
\section[The characteristic map, twisting homomorphisms and convolution symmetries]
{The characteristic map, twisting homomorphisms \\ and convolution symmetries}

\subsection{The characteristic map and the Cauchy-Littlewood formula}

Let $\Lambda = \Cb[P_1, P_2, \ldots]$ be the ring of symmetric functions, equipped with the usual projection homomorphism
\be\begin{split}
  \ev_{n,\xb} \colon \Lambda &\to \Lambda_n \cr
  P_k &\mapsto \sum_{a=1}^n x_a^k
\end{split}\ee
onto the ring $\Lambda_n$ of symmetric polynomials in $n$ variables for each $n$.
The two bases of $\Lambda$ relevant for our purposes will be the power sum symmetric functions \cite{Mac}
\be
 P_\mu :=  P_{\mu_1} P_{\mu_2} \cdots P_{\mu_{\ell(\mu)}}
\ee
and the Schur symmetric functions $S_\lambda$, both labelled by integer partitions $\lambda = \lambda_1 \ge \dots  \lambda_{\ell(\lambda)} > 0$, 
$\mu = \mu_1 \ge \dots  \mu_{\ell(\mu)} > 0$.
These are related  by the Frobenius formula,
\bea
  P_\mu = \sum_{\mu \atop  \abs{\lambda} = \abs{\mu}} \chi_\lambda(\mu) S_\lambda,  \quad 
  S_\lambda = \sum_{\lambda\atop \abs{\mu} = \abs{\lambda} }\chi_\lambda(\mu) \frac{P_\mu}{Z_\mu},
  \label{Frobenius_formula}
\eea
where  $\chi_\lambda(\mu)$ are the irreducible characters of the symmetric groups
(with $\mu$  denoting the conjugacy class consisting of elements with cycle lengths $\mu_i$)
and, denoting the number of parts of $\lambda$ equal to $i$ by $m_i$,
\be
  Z_\mu \deq \prod_i m_i! \, i^{m_i}.
\ee

The irreducible characters $\chi_\mu$ also appear in the change of basis formula between two important bases of the  center $\{\ZZ(\Cb[S_n])\}$ of the symmetric group algebra, namely the conjugacy class sums $C_\mu$, which consist of sums of all permutation with a fixed cycle type $\mu$,
\be
C_\mu \deq \sum_{\mathclap{\substack{g \in S_n \\ \cyc(g) = \mu}}} g, 
\ee
 and the orthogonal idempotents $\{F_\lambda\}$,  corresponding to the irreducible representations of $S_n$,
which have the useful computational property that 
 \be
 F_\lambda F_\lambda = F_\lambda, \quad F_\lambda F_\nu = 0 \text{ for }\lambda \neq \mu.
 \ee
These are similarly related by
\be
  C_\mu = \frac{1}{Z_\mu} \sum_{\lambda,\, \abs{\lambda} = \abs{\mu}} h_\lambda \chi_\lambda(\mu) F_\lambda,  \quad
  F_\lambda = \frac{1}{h_\lambda} \sum_{\mu,\, \abs{\mu} = \abs{\lambda}} \chi_\lambda(\mu) C_\mu,
  \label{C_mu_F_lambda}
\ee
where $h_\lambda$ is the product of the hook lengths of the partition $\lambda$, also given by the formulae
\be
  h_\lambda^{-1}
    = \frac{\chi_\lambda(\id)}{\abs{\lambda}!}
    = \det\left(\frac{1}{(\lambda_i - i + j)!}\right) \Bigg|_{1 \le i,j \le \ell(\lambda)}.
\ee

Frobenius's characteristic map is a linear map that intertwines these changes of bases in $\ZZ(\Cb[S_n])$ and $\Lambda$, defined by
\be\begin{split}
  \ch_n \colon \ZZ(\Cb[S_n]) &\to \Lambda \\
  C_\mu  &\mapsto P_\mu  / Z_\mu  \\
  F_\lambda &\mapsto S_\lambda / h_\lambda.
\end{split}\ee
In fact this map is a linear isomorphism if we restrict its codomain to the space of homogeneous  symmetric functions of degree $n$.
It will be useful to extend the map $\ch_n$ to the whole group algebra $\Cb[S_n]$ by defining
\be
  \ch_n(g) \deq P_{\cyc(g)} / n!
\ee
for a permutation $g \in S_n$. 
Extending the tensor product map $\ch_n \otimes \ch_n$ bilinearly to the direct sum $\bigoplus_{n\in \Nb^+}\Cb[S_n] \otimes \Cb[S_n]$
\be
\label{eq:basic-element}
\ch \otimes \ch :\bigoplus_{n\in \Nb^+}\Cb[S_n] \otimes \Cb[S_n] \ra \Lambda\otimes \Lambda
\ee
then gives
\be\label{eq:ch-ch-basic}
  \ch \otimes \ch \left( \bigoplus_{n} \sum_{g \in S_n} n! \, g \otimes g \right)
    = \sum_{\mu} \frac{1}{Z_\mu} P_\mu ([\xb]) P_\mu ([\yb])
    = \sum_{\lambda} S_\lambda ([\xb]) S_\lambda ([\yb])
    = \prod_{a, b} \frac{1}{1 - x_a y_b},
   \ee
where we have identified $\Lambda \otimes \Lambda$ with the ring of symmetric functions in two infinite
sets of variables $\xb = (x_1, x_2, \dots )$ and $\yb = (y_1, y_2, \dots )$.
The last equality is just the Cauchy-Littlewood formula (\cite{Mac}).
Restricting the 2D Toda flow variables
\be
  \tb = (t_1, t_2, \ldots), \qquad
  \sb = (s_1, s_2, \ldots)
\ee
to the power sum values
\be
  t_i = \frac{1}{i} \sum_{a=1}^\infty x_a^i \qquad
  s_i = \frac{1}{i} \sum_{b=1}^\infty y_b^i
\label{power_sum_2toda_vars}
\ee
we have
\be
\prod_{a, b } \frac{1}{1 - x_a y_b} = e^{\sum_{i=1}^\infty i t_i s_i},
\ee
which is the vacuum 2D Toda $\tau$-function, restricted to the values~\eqref{power_sum_2toda_vars}.

\subsection{\texorpdfstring{``Twisting''}{"Twisting"} homomorphisms: multiplication by power sums in the Jucys-Murphy elements}
\label{sec:twisting}

The map \eqref{eq:ch-ch-basic}  can be ``twisted'' by elements of an abelian group $\AA_{P,n}$ acting on the  center $\ZZ(\Cb[S_n])$ to obtain other 2D Toda  $\tau$-functions of interest as follows.
The Jucys-Murphy elements $\set{ \JJ_b \in \Cb[S_n] }_{b=1,\ldots,n}$ are defined as sums of transpositions,
\be
  \JJ_b \deq \sum_{a=1}^{b-1} (a\,b).
\ee
They are easily seen to generate a commutative subalgebra of $\Cb[S_n]$, and any symmetric polynomial in them is in the  center $\ZZ(\Cb[S_n])$.
We can adjoin  to the ring of symmetric functions $\Lambda$ a ``trivial'' element $P_0$, taking
value $n$ under the extended evaluation map
\be\begin{split}
  \ev_{n,\JJ} \colon \Lambda[P_0] &\to \ZZ(\Cb[S_n]) \\
  G &\mapsto G(\JJ) \\
  P_k &\mapsto P_k(\JJ) \deq \sum_{b=1}^n \JJ_b^k \\
  P_0 &\mapsto P_0(\JJ) \deq n \, \id.
\end{split}\ee

\br
  While the trivial element $P_0$ acts like a scalar for any fixed $n$, it allows us to write down expressions for conjugacy classes $C_\lambda$ which hold uniformly for all $n$ (see \cite{CGS, DG}), such as
  \begin{align}
    P_1(\JJ) &= C_{21^{n-2}}, &
    P_2(\JJ) - \tfrac12 P_0(\JJ) \big( P_0(\JJ) - 1 \big) &= C_{31^{n-3}} \cr
    1 &= C_{1^n}, &
    \tfrac12 P_{11}(\JJ) - \tfrac32 P_2(\JJ) + \tfrac12 P_0(\JJ) \big( P_0(\JJ) - 1 \big) &= C_{221^{n-4}}.
  \end{align}
  From these follow equations for products of conjugacy classes such as
  \be
    C_{21^{n-2}} \cdot C_{21^{n-2}} = 3 C_{31^{n-3}} + 2 C_{221^{n-4}} + \binom{n}{2} C_{1^n}.
  \ee
  In this way, the ring $\Lambda[P_0]$ can be seen as an inverse limit of the  centers $\ZZ(\Cb[S_n])$ for all $n \in \Nb$, sometimes called the Farahat-Higman algebra \cite{FH}.
\er

Endomorphisms of $\ZZ(\Cb[S_n])$ consisting of multiplication by a central element are diagonal in the basis $\set{F_\mu}$ of orthogonal idempotents.
For elements of the form $G(\JJ)$, the result of Jucys \cite{Ju} and Murphy \cite{Mu} gives the eigenvalues as
\be
  G(\JJ) F_\lambda = G \big( \cont(\lambda) \big) F_\lambda,
  \label{G_eigenvalue}
\ee
where $\cont(\lambda)$ is the \emph{multi}set (possibly with repeated values) of contents of the boxes $(i, j)$ appearing in the Young diagram for the partition $\lambda$,
\be
  \cont(\lambda) \deq \multiset{ j-i : (i,j) \in \lambda }.
\ee
If  $G\in\Lambda[P_0]$ is expressible in the form of a  product 
\be
G=f(P_0)\prod_{a=1}^\infty F(x_a), \quad P_i =\sum_{a=1}^\infty x_a^i, 
\label{Gproductform}
\ee 
the eigenvalue $ G \big( \cont(\lambda) \big)$ is expressible as a content product:
\be
 G \big( \cont(\lambda) \big) = f(|\lambda|) \prod_{(i,j) \in \lambda} F(j-i).
\label{G_content_product}
\ee

Our ``twisting'' of the map $\ch \otimes \ch$ is defined to act on the second tensor factor only through multiplication
by a symmetric function $G(\JJ)$ for $G \in \Lambda[P_0]$ before applying the Frobenius characteristic map:
\be\label{eq:twisted-operator}
  \ch \otimes \big(\ch \circ \, G(\JJ)\big) \colon \ZZ(\Cb[S_n]) \otimes \ZZ(\Cb[S_n]) \to \Lambda \otimes \Lambda.
\ee
Using \eqref{G_eigenvalue}, it is easy to compute the result of applying the twisted homomorphism~\eqref{eq:twisted-operator} to the element \eqref{eq:basic-element} in the basis $\set{S_\lambda ([\xb]) S_\mu ([\yb])}$ in three steps.
First we apply the map $\ch$ to the left tensor factor
\be
  \ch \otimes \id \colon
    \sum_{g \in S_n} n! \, g \otimes g \mapsto     \sum_{\mathclap{\mu,\, \abs{\mu} = n}}   P_\mu ([\xb]) \otimes C_\mu =
    \sum_{\mathclap{\lambda,\, \abs{\lambda} = n}} h_\lambda S_\lambda ([\xb]) \otimes F_\lambda
\ee
by eqs.~(\ref{Frobenius_formula}), (\ref{C_mu_F_lambda}).
Then we multiply the right tensor factor  by $G(\JJ)$
\be
  \id \otimes G(\JJ) \colon
    \sum_{\mathclap{\lambda,\, \abs{\lambda} = n}} h_\lambda  S_\lambda ([\xb]) \otimes F_\lambda \mapsto
    \sum_{\mathclap{\lambda,\, \abs{\lambda} = n}} G \big( \cont(\lambda) \big) h_\lambda S_\lambda ([\xb]) \otimes F_\lambda.
\ee
And finally, we apply the map $\ch$ to the right tensor factor
\be\label{eq:twisted-S}
  \id \otimes \ch \colon
    \sum_{\mathclap{\lambda,\, \abs{\lambda} = n}} G \big( \cont(\lambda) \big) h_\lambda S_\lambda ([\xb]) \otimes F_\lambda \mapsto
    \sum_{\mathclap{\lambda,\, \abs{\lambda} = n}} G \big( \cont(\lambda) \big) S_\lambda ([\xb]) S_\lambda ([\yb]).
\ee
As will be seen in \autoref{sec:fermionic}, this is the restriction of a 2-KP $\tau$-function of hypergeometric type to the values~\eqref{power_sum_2toda_vars} of the flow parameters which, by suitable normalization, can be extended to a $\Zb$-lattice of 2D Toda $\tau$-functions.

We can perform the same computation in the basis $\set{P_\lambda ([\xb]) P_\mu ([\yb])}$ instead.
Multiplying the basis elements $C_\lambda$ by $G(\JJ)$ gives a linear combination
\be
  G(\JJ) C_\lambda = \sum_{\mu} G_{\lambda \mu} C_\mu,
\ee
where the coefficients $G_{\lambda \mu}$ are given in general by the character sum
\be
  G_{\lambda \mu} = \frac{1}{Z_\lambda} \sum_\nu G \big( \cont(\nu) \big) \chi_\nu(\lambda) \chi_\nu(\mu).
\ee
As will be seen below, in many cases $G_{\lambda \mu}$ is a combinatorial number, counting certain types of paths in the Cayley graph of $S_n$ from an element in the conjugacy class of type $C_\lambda$ to one in the class $C_\mu$.
Applying the twisted homomorphism~\eqref{eq:twisted-operator} to the element~\eqref{eq:basic-element} in three steps again gives
\begin{align}
  \ch \otimes \id &\colon
    \sum_{g \in S_n} n! \, g \otimes g \mapsto
    \sum_{\mathclap{\lambda,\, \abs{\lambda} = n}} P_\lambda ([\xb]) \otimes C_\lambda \\
  \id \otimes G(\JJ) &\colon
    \sum_{\mathclap{\lambda,\, \abs{\lambda} = n}} P_\lambda ([\xb]) \otimes C_\lambda \mapsto
    \sum_{\mathclap{\lambda,\, \mu,\, \abs{\lambda} = \abs{\mu} = n}} G_{\lambda \mu}  P_\lambda ([\xb]) \otimes C_\mu \\
  \label{eq:twisted-P}
  \id \otimes \ch &\colon
    \sum_{\mathclap{\lambda,\, \mu,\, \abs{\lambda} = \abs{\mu} = n}} G_{\lambda \mu} P_\lambda ([\xb]) \otimes C_\mu \mapsto
    \sum_{\mathclap{\lambda,\, \mu,\, \abs{\lambda} = \abs{\mu} = n}} z_\mu^{-1} G_{\lambda \mu} P_\lambda ([\xb]) P_\mu ([\yb]).
\end{align}
Comparing~\eqref{eq:twisted-S} and~\eqref{eq:twisted-P}, we get a twisted version of~\eqref{eq:ch-ch-basic}:
\be\label{eq:ch-ch-twisted}
  \ch \otimes \ch \left( \sum_{g \in S_n} n! \, g \otimes \big( G(\JJ) g \big) \right)
    = \sum_{\mathclap{\substack{\lambda,\, \mu \\ \abs{\lambda} = \abs{\mu} = n}}} G_{\lambda \mu} P_\lambda ([\xb]) P_\mu ([\yb])
    = \sum_{\mathclap{\lambda,\, \abs{\lambda} = n}} G \big( \cont(\lambda) \big) S_\lambda ([\xb]) S_\lambda ([\yb]).
\ee

\subsection{Interpretation as generating functions}
\label{sec:generating_functions}

We now consider the combinatorial meaning of the coefficients $G_{\lambda \mu}$.
If the operator $G(\JJ)$ is taken to be the power series in a formal parameter $z$ given by
\be
  G_c(z, \JJ)
    \deq e^{P_1(\JJ) z}
    = \sum_{k = 0}^\infty P_1(\JJ)^k \frac{z^k}{k!},
\ee
then the coefficient of $z^k / k!$ in $G_c(z, \JJ)$ is the element
\be
P_1(\JJ)^k = \left( \sum_{a < b \atop b =1}^n (a b) \right)^k=(C_{(2, 1^{n-2})})^k.
\ee
This acts on the group algebra $\Cb[S_n]$ by multiplication by every possible product
\be\label{eq:trans-prod}
  (a_1 \, b_1) (a_2 \, b_2) \cdots (a_k \, b_k)
\ee
of $k$ (not necessarily disjoint, nor even distinct) transpositions.
Thus, for any pair of permutations $g, h \in S_n$, the coefficient of $g \otimes h \, z^k / k!$ in the element
\be
  \sum_{g \in S_n} n! \, g \otimes \big( G_c(z, \JJ) g \big)
\ee
is the number of solutions in $S_n$ of the equation
\be
  h = (a_1 \, b_1) (a_2 \, b_2) \cdots (a_k \, b_k) g,
\ee
which is precisely the number of $k$-step walks from the vertex $g$ to the vertex $h$ in the Cayley graph of $S_n$ generated by all transpositions.
If we then apply the characteristic map $\ch \otimes \ch$ to this element, as in~\eqref{eq:ch-ch-twisted}, we see that the coefficient $G_{\lambda \mu}$ in this case is the generating function for $k$-step walks in the Cayley graph from any vertex $g$ with cycle type $\lambda$ to any vertex $h$ with cycle type $\mu$.

As another example, take the operator $G$ to be the generating function $H(z)$  for the 
complete symmetric functions. Then  $G(\JJ)$ is the power series
\be\label{eq:monotone-definition}
  H(z, \JJ)
    \deq \prod_{b=1}^n \frac{1}{1 - z \JJ_b}
    = \sum_{k=0}^\infty z^k \sum_{b_1 \leq b_2 \leq \cdots \leq b_k} \JJ_{b_1} \JJ_{b_2} \cdots \JJ_{b_k}.
\ee
The eigenvalue of this operator acting on the basis elements $F_\lambda$ is given
by
\be
 H(z, \JJ) F_\lambda = r_\lambda^H(z) F_\lambda
 \ee
 where
 \be
  r_\lambda^H(z) =\prod_{(ij) \in \lambda} (1-z(j-i))^{-1}.
  \ee
The coefficient of $z^k$ in $H(z, \JJ)$ is the operator on $\Cb[S_n]$ which acts by multiplication by every possible product~\eqref{eq:trans-prod} subject to the restriction that
\be
  b_1 \leq b_2 \leq \cdots \leq b_k,
\ee
where $a_i < b_i$ by convention.
The corresponding walks in the Cayley graph are called \emph{(weakly) monotone} walks, and for this choice of operator $G(\JJ)$, the coefficient $G_{\lambda \mu}$ in~\eqref{eq:ch-ch-twisted} is the generating function for $k$-step weakly monotone walks in the Cayley graph from any permutation with cycle type $\lambda$ to any permutation with cycle type $\mu$.
These are precisely the (nonconnected) monotone double Hurwitz numbers \cite{GGN1}.

As a final example we can choose $G$ to be the generating function $E(z)$ of the elementary
symmetric functions  to obtain an operator $G(\JJ)$ with  combinatorial meaning:
\be
  E(w, \JJ)
    \deq \prod_{a=1}^n (1 + w \JJ_a)
    = \sum_{k=0}^\infty w^k \sum_{b_1 < b_2 < \cdots < b_k} \JJ_{b_1} \JJ_{b_2} \cdots \JJ_{b_k},
   \label{strongly_monotone}
\ee
The eigenvalue of $ E(w, \JJ)$ acting on the basis elements $F_\lambda$ is given by
\be
 E(w, \JJ) F_\lambda = r_\lambda^E(z) F_\lambda
 \ee
 where
 \be
  r_\lambda^E(w) =\prod_{(ij \in \lambda} (1+w(j-i)).
  \ee
The inner summation in (\ref{strongly_monotone}) is now over strictly increasing sequences of $b_i$'s instead of weakly increasing sequences.
The corresponding walks in the Cayley graph are called \emph{strictly monotone} walks, and the coefficient $G_{\lambda \mu}$ becomes the generating function for these walks.

\subsection{Fermionic construction of 2-Toda \texorpdfstring{$\tau$}{tau}-functions}
\label{sec:fermionic}

In the following, $\FF$ denotes the full Fermionic Fock space, $\FF_N$ the charge $N$ sector, $N \in \Zb$, with orthonormal basis elements $\set{\ket{\lambda; N}}$ labelled by partitions $\lambda$.
The vacuum vector in the $\FF_N$ sector is denoted $\ket{N} \deq \ket{0; N}$.
The Fermi creation and annihilation operators, $\psi_i, \psi_i^\dag$ satisfy the usual anticommutation relations
\be
  [ \psi_i, \psi_j^\dag ]_+ = \delta_{ij}, \quad
  [ \psi_i, \psi_j ]_+ = 0,  \quad
  [ \psi_i^\dag, \psi_j^\dag ]_+ = 0,
\ee
and the vanishing relations
\be
  \psi_j \ket{N} = 0 \text{ for $j \leq N-1$},  \qquad
  \psi_j^\dag \ket{N} = 0 \text{ for $j \geq N$}.
\ee
The normal ordered product $\no{\OO_1\OO_2 \cdots \OO_k}$ of Fermionic operators is defined so that their matrix elements in the vacuum state $\ket{0}$ vanish.
The KP or 2D Toda flow parameters are denoted $\tb = (t_1, t_2, \ldots)$ and $\sb = (s_1, s_2, \ldots)$ and
\be
  \tb \deq [A], \qquad
  t_i \deq \frac{1}{i} \tr(A^i)
\ee
denotes their specialization to the trace invariants of a matrix $A$.
The vertex operators generating the KP and 2D Toda flows are defined as
\be
  \hat{\gamma}_{\pm}(\tb) \deq e^{\sum_{i=1}^\infty t_i J_{\pm i}},
\ee
where
\be
  J_i \deq \sum_{j \in \Zb} \no{\psi_j \psi_{j+i}^\dag}.
\ee
More generally,
\be
  \hat{g} = e^{\sum_{i, j \in \Zb} A_{ij} \no{\psi_i \psi_j^\dag}}
\ee
denotes the $\GL(\infty)$ group element determining a $\Zb$-lattice of $\tau$-functions as vacuum expectation values
\begin{align}
  \tau^{\text{KP}}_g (N, \tb) &= \bra{N} \hat{\gamma}_+ (\tb) \hat{g} \ket{N}, \\
  \tau^{\text{2D Toda}}_g (N, \tb, \sb) &= \bra{N} \hat{\gamma}_+ (\tb) \hat{g} \, \hat{\gamma}_- (\sb) \ket{N},
\end{align}

In particular, we have the abelian subgroup $\hat{C} \ss \GL(\infty)$ consisting of diagonal operators of the form
\be
  \hat{g} = \hat{C}_\rho \deq e^{\sum_{j \in \Zb} T_j \no{\psi_j \psi_j^\dag}},
\ee
where
\be
  \rho_i \deq e^{T_i}.
\ee
These are referred to as \emph{convolution symmetries} in \cite{HO4}, since in a basis consisting of monomials in a complex variable $z$, the $\rho_i$'s may be viewed as Fourier coefficients of a function $\rho(z) \in L^2(S^1)$, that acts by convolution product.
Defining $r_i$ as the ratio of consecutive elements,
\be
  r_i \deq \frac{\rho_i}{\rho_{i-1}} = e^{T_i - T_{i-1}},
\ee
we have \cite{HO4}
\be
  \hat{C}_\rho \ket{\lambda; N} = r_\lambda(N) \ket{\lambda; N},
\ee
where
\begin{align}
  r_\lambda(N) &= r_0(N) \prod_{(i,j) \in \lambda} r_{N+j-i}, \\
  r_0(N) &= \begin{cases}
    \prod_{j=0}^{N-1} \rho_j &\text{if $N>0$,} \\
    \qquad 1 &\text{if $N=0$,} \\
    \prod_{j=N}^{-1} \rho_j^{-1} &\text{if $N<0$.}
  \end{cases}
\end{align}

Since the convolution symmetry operators $\hat{C}_\rho$ are diagonal in the orthonormal basis $\ket{\lambda; N}$ and
\be
  \bra{\lambda; N} \hat{\gamma}_- \ket{0}
    = \bra{0} \hat{\gamma}_+ \ket{\lambda; N}
    = S_\lambda(\tb),
\ee
the corresponding $\tau$-functions have Schur function expansions
\begin{align}
  \tau_{C_\rho}^{\text{KP}} (N, \tb)
    &= \bra{N} \hat{\gamma}_+ (\tb) \hat{C}_\rho \ket{N}
     = \sum_\lambda r_\lambda(N) S_\lambda (\tb) \\
  \tau_{C_\rho}^{\text{2D Toda}} (N, \tb, \sb)
    &= \bra{N} \hat{\gamma}_+ (\tb) \hat{C}_\rho \hat{\gamma}_- (\sb) \ket{N}
     = \sum_\lambda r_\lambda(N) S_\lambda (\tb) S_\lambda (\sb).
\end{align}
This class of $\tau$-functions is referred to in \cite{OS} as being of \emph{hypergeometric type}, since it includes various multivariable generalizations of hypergeometric functions.

Equivalently, we may define the $N$-shifted operator
\be
  \hat{C}_\rho(N)
    \deq \hat{R}^{-N} \hat{C}_\rho \hat{R}^N
    = e^{\sum_{j \in \Zb} T_{j+N} \no{\psi_j \psi_j^\dag}},
\ee
where $\hat{R}$ is the shift operator defined by
\be
  \hat{R} \ket{\lambda; N} = \ket{\lambda; N+1}.
\ee
Then $\tau_{C_\rho}^{\text{KP}}(N, \tb)$ and $\tau_{C_\rho}^{\text{2D Toda}}(N, \tb, \sb) $ may equivalently be expressed as
\begin{align}
  \tau_{C_\rho}^{\text{KP}} (N, \tb)
    &= \bra{0} \hat{\gamma}_+ (\tb) \hat{C}_\rho(N) \ket{0} \\
  \tau_{C_\rho}^{\text{2D Toda}} (N, \tb, \sb)
    &= \bra{0} \hat{\gamma}_+ (\tb) \hat{C}_\rho(N) \hat{\gamma}_- (\sb) \ket{0}.
\end{align}

\subsection{The abelian group \texorpdfstring{$\AA_P$}{AP} and the intertwining homomorphism \texorpdfstring{$\II$}{I}}
\label{sec:intertwining}

If we choose the ``twisting'' homomorphism $G(\JJ)$ from \autoref{sec:twisting} to be the generating function $H(z, \JJ)$ for complete symmetric polynomials in Jucys-Murphy elements,  as in~\eqref{eq:monotone-definition}, it is easily verified that the eigenvalues are given by
\be
  H(z, \JJ) F_\lambda = r_\lambda^{[z]}(0) F_\lambda,
\ee
where
\begin{align}
  r_\lambda^{[z]}(0) \deq \prod_{(i,j) \in \lambda} r_{j-i}^{[z]}, &&
  r_j^{[z]} \deq \frac{1}{1 - jz}.
\end{align}
Forming a product of such elements, with the parameter $z$ replaced by a sequence of distinct values
\be
  \zb \deq \set{z_\alpha}_{\alpha = 1, \ldots, m}
\ee
and defining
\be
  \theta_i \deq \frac{1}{i} \sum_{\alpha=1}^m z_\alpha^i,
\ee
it follows that
\be
  e^{\sum_{i=1}^\infty \theta_i P_i(\JJ)} = \prod_{\alpha=1}^m H(z_\alpha, \JJ),
\ee
and hence this operator has eigenvalues
\be
  e^{\sum_{i=1}^\infty \theta_i P_i(\JJ)} F_\lambda = r_\lambda^{[\zb]}(0) F_\lambda,
\ee
where
\be\label{r_lambda_z}
  r_\lambda^{[\zb]}(0)
    \deq \prod_{\alpha=1}^m r_\lambda^{[z_\alpha]}(0)
    = \prod_{\alpha=1}^m \prod_{(i,j) \in \lambda} \frac{1}{1 - (j-i)z_\alpha}.
\ee
Extending this to include the trivial element $P_0(\JJ) = n = \abs{\lambda}$, we have
\be\label{F_lambda_eigenvector}
  e^{\sum_{i=0}^\infty \theta_i P_i(\JJ)} F_\lambda
    \deq e^{t_0 \abs{\lambda}} r_\lambda^{[\zb]}(0) F_\lambda.
\ee
Let $\AA_P$ denote the abelian group within $\overline{\Lambda[P_0]}$ consisting of elements of the form
\be\label{twist_group_element}
  e^{\sum_{i=0}^\infty \theta_i P_i} = e^{\theta_0 P_0} \prod_{\alpha=1}^m H(z_\alpha, \xb),
\ee
which acts on each  center $\ZZ(\Cb[S_n])$ via the evaluation at Jucys-Murphy elements as~\eqref{F_lambda_eigenvector}.
Applying the characteristic map $\ch \otimes \ch$ to the ``twisted'' sum corresponding to multiplication by the element~\eqref{twist_group_element} gives
\be
  \ch \otimes \ch \left( \sum_{g\in S_n} n!\, g \otimes \left( e^{\sum_{i=0}^\infty \theta_i P_i(\JJ)} g \right) \right)
    = \sum_{\lambda,\, \abs{\lambda} = n} e^{\theta_0 \abs{\lambda}} r_\lambda^{[\zb]}(0) S_\lambda ([\xb]) S_\lambda ([\yb]).
\ee

Note that, since the $\theta_i$'s may be viewed as the trace invariants of diagonal matrices having the $z_\alpha$'s as eigenvalues, the first $m$ of these $\set{\theta_1, \ldots, \theta_m}$ are independent, while the others are determined in terms of these by the solution of polynomial equations.
However, if we let $m \to \infty$ and extend $\set{z_\alpha}_{\alpha = 1, \ldots, m}$ to an infinite sequence of distinct complex parameters that avoid reciprocals of integers and satisfy the convergence property
\be
  \sum_{\alpha=1}^\infty \abs{z_\alpha} < \infty,
\ee
it follows that the infinite product
\be
  \prod_{\alpha=1}^\infty \prod_{(i,j) \in \lambda} \frac{1}{1 - (j-i)z_\alpha}
\ee
converges, and the $t_i$'s are functionally independent.

Since the image under the characteristic map $\ch$ of the  center $\ZZ(\Cb[S_n])$ is precisely the homogeneous degree $n$ part of the ring $\Lambda$ of symmetric functions, the map $\ch$ can be extended to a linear isomorphism
\be
  \ch \colon \bigoplus_{n \geq 0} \ZZ(\Cb[S_n]) \to \Lambda,
\ee
which we can compose with the Fermionization map
\be\begin{split}
  \Lambda &\to \FF_0 \\
  S_\lambda &\mapsto \ket{\lambda; 0}
\end{split}\ee
to get a linear isomorphism
\be\begin{split}
  \grF \colon \bigoplus_{n \geq 0} \ZZ(\Cb[S_n]) &\to \FF_0 \\
  F_\lambda &\mapsto \frac{1}{h_\lambda} \ket{\lambda; 0}.
\end{split}\ee
The linear action of the group $\AA_P$ on each of the summands $\ZZ(\Cb[S_n])$, extends to a diagonal action on the domain of the map $\grF$.
We also have an action of the group of convolution symmetries $\hat{C}$ on the codomain of the map $\grF$.
We now define a map $\II \colon \AA_P \to \hat{C}$ between these actions for which $\grF$ is the intertwining map.

Restricting ourselves to a set of parameters $\{z_\alpha\}_{\alpha=1, \ldots, m}$ with
\be
  \frac{1}{z_\alpha} \not \in \Zb,
\ee
we can define the homomorphism by
\be\label{A_P_homomorphism_finite}\begin{split}
  \II \colon \AA_P &\to \hat{C} \\
  e^{\sum_{i=0}^\infty \theta_i P_i}
    &\mapsto e^{\sum_{j \in \Zb} T_j \no{\psi_j \psi_j^\dag}}
    \eqqcolon \hat{C}_{\rho([\zb])},
\end{split}\ee
where
\begin{align}
  \rho_j([\zb])
    &\deq e^{T_j}
    = \begin{cases}
        e^{j \theta_0} \prod_{\alpha=1}^m \prod_{k=1}^j \frac{1}{1 - k z_\alpha}
          &\text{if $j > 0$,} \\
        e^{j \theta_0}
          &\text{if $j = 0$,} \\
        e^{j \theta_0} \prod_{\alpha=1}^m \prod_{k=j+1}^0 (1 - k z_\alpha)
          &\text{if $j < 0$,} \\
      \end{cases} \\
  T_j
    &\deq \begin{cases}
        j \theta_0 - \sum_{\alpha=1}^m \sum_{k=1}^j \ln(1 - k z_\alpha)
          &\text{if $j > 0$,} \\
        j \theta_0
          &\text{if $j = 0$,} \\
        j \theta_0 + \sum_{\alpha=1}^m \sum_{k=j+1}^0 \ln(1 - k z_\alpha)
          &\text{if $j < 0$,} \\
      \end{cases} \\
  r_j^{[\zb]}
    &= \frac{\rho_j([\zb])}{\rho_{j-1}([\zb])}
     = e^{\theta_0} \prod_{\alpha=1}^m \frac{1}{1 - j z_\alpha}.
\end{align}
It follows that
\be\label{C_rho_eigenvalue}
  \hat{C}_{\rho([\zb])} \ket{\lambda; 0}
    = e^{\theta_0 \abs{\lambda}} r_\lambda^{[\zb]}(0) \ket{\lambda; 0},
\ee
where $r_\lambda^{[\zb]}(0)$ is defined in~\eqref{r_lambda_z}.

We then have the following:
\begin{theorem}
\label{intertwining_homomorphism}
  The map $\grF \colon \bigoplus_{n \geq 0} \ZZ(\Cb[S_n]) \to \FF_0$ intertwines the multiplicative action of the group $\AA_P$ on $\bigoplus_{n \geq 0} \ZZ(\Cb[S_n])$ with the linear action of the group $\hat{C}$ on $\FF_0$ via the homomorphism $\II \colon \AA_P \to \hat{C}$.
\end{theorem}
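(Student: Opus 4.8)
The plan is to reduce the statement to a check on the distinguished basis $\{F_\lambda\}$ of $\bigoplus_{n\ge 0}\ZZ(\Cb[S_n])$, where both the $\AA_P$-action and the $\hat C$-action are diagonal, and then compare eigenvalues. Concretely, fix a group element $e^{\sum_{i=0}^\infty\theta_i P_i}\in\AA_P$, with image $\hat C_{\rho([\zb])}=\II(e^{\sum_i\theta_i P_i})\in\hat C$. For the intertwining property we must show that for every partition $\lambda$,
\be
  \grF\!\left( e^{\sum_{i=0}^\infty\theta_i P_i(\JJ)}\cdot F_\lambda \right)
    = \hat C_{\rho([\zb])}\,\grF(F_\lambda).
\ee
The left-hand side is computed from the Jucys--Murphy eigenvalue formula~\eqref{F_lambda_eigenvector}: the central element $e^{\sum_i\theta_i P_i(\JJ)}$ acts on $F_\lambda$ by the scalar $e^{\theta_0|\lambda|}\,r_\lambda^{[\zb]}(0)$, so applying $\grF$ (which sends $F_\lambda\mapsto h_\lambda^{-1}\ket{\lambda;0}$, a linear map) gives $e^{\theta_0|\lambda|}\,r_\lambda^{[\zb]}(0)\,h_\lambda^{-1}\ket{\lambda;0}$. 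The right-hand side uses $\grF(F_\lambda)=h_\lambda^{-1}\ket{\lambda;0}$ together with the convolution-symmetry eigenvalue relation~\eqref{C_rho_eigenvalue}, which states exactly that $\hat C_{\rho([\zb])}\ket{\lambda;0}=e^{\theta_0|\lambda|}\,r_\lambda^{[\zb]}(0)\ket{\lambda;0}$. The two sides therefore coincide. Since $\grF$ is a linear isomorphism and the $F_\lambda$ form a basis, this verifies the intertwining identity for a single group element.

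It then remains to observe that $\II$ is a homomorphism of abelian groups, so that the intertwining relation is compatible with composition: if $\II(a_1)=\hat C_{\rho_1}$ and $\II(a_2)=\hat C_{\rho_2}$ then $\II(a_1 a_2)=\hat C_{\rho_1}\hat C_{\rho_2}$, which follows because in the defining formula~\eqref{A_P_homomorphism_finite} the exponents $T_j$ depend additively on the parameters $\theta_i$ (equivalently, $\rho_j([\zb])$ is multiplicative in the product of $H(z_\alpha,\xb)$ factors), and because diagonal operators on $\FF_0$ commute and multiply by multiplying eigenvalues. Thus both $\AA_P$ and $\hat C$ act by diagonal scalars in matching bases, $\II$ matches the scalars factor by factor via~\eqref{F_lambda_eigenvector} and~\eqref{C_rho_eigenvalue}, and $\grF$ identifies the two eigenbases; the intertwining property is immediate on the nose for each element and hence for the whole group.

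The only genuine content — and the one place to be careful — is the bookkeeping that makes~\eqref{F_lambda_eigenvector} and~\eqref{C_rho_eigenvalue} literally the same scalar. One must check that the content product $r_\lambda^{[\zb]}(0)=\prod_{\alpha}\prod_{(i,j)\in\lambda}(1-(j-i)z_\alpha)^{-1}$ arising from the Jucys--Murphy side is reproduced by the telescoping product $r_\lambda(0)=r_0(0)\prod_{(i,j)\in\lambda}r_{j-i}^{[\zb]}$ on the convolution side, using $r_0(0)=1$ and $r_j^{[\zb]}=e^{\theta_0}\prod_\alpha(1-jz_\alpha)^{-1}$; the $e^{\theta_0}$ factors collect into $e^{\theta_0|\lambda|}$ since there are $|\lambda|$ boxes, matching the $P_0$ contribution. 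This is a routine verification given the explicit definitions of $\rho_j([\zb])$ and $T_j$ in~\eqref{A_P_homomorphism_finite}, and it is exactly what motivated the piecewise definition of $\rho_j$ there; once this identity is in hand the theorem follows with no further work. (The extension from finitely many parameters $\{z_\alpha\}_{\alpha=1,\ldots,m}$ to a convergent infinite sequence is handled by the convergence hypothesis $\sum_\alpha|z_\alpha|<\infty$ already noted, under which all the products converge and the same eigenvalue computation goes through verbatim.)
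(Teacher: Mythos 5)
Your proposal is correct and follows essentially the same route as the paper's own proof: reduce to the eigenbasis $\set{F_\lambda}$, use the fact that $\grF$ sends $F_\lambda$ to $h_\lambda^{-1}\ket{\lambda;0}$, and match the eigenvalues given by~\eqref{F_lambda_eigenvector} and~\eqref{C_rho_eigenvalue}. The additional bookkeeping you supply (the telescoping identity for $r_\lambda^{[\zb]}(0)$ and the multiplicativity of $\II$) is implicit in the paper's one-line argument and is a welcome elaboration rather than a departure.
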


\begin{proof}
  This follows from the fact that, up to scaling, the linear map $\grF$ takes $F_\lambda$ into $\ket{\lambda; 0}$ and these are, respectively, eigenvectors of the automorphism of $\ZZ(\Cb[S_n])$ defined by multiplication by $e^{\sum_{i=0}^\infty t_i P_i}$ and $\II \big( e^{\sum_{i=0}^\infty t_i P_i} \big)$ which, as given by~\eqref{F_lambda_eigenvector} and~\eqref{C_rho_eigenvalue}, have the same eigenvalue $ e^{t_0 \abs{\lambda}} r_\lambda^{[\zb]}(0)$.
\end{proof}

\br
  Note that on the intermediate space $\Lambda$ of the composition
  \be
    \grF \colon \bigoplus_{n \geq 0} \ZZ(\Cb[S_n]) \to \Lambda \to \FF_0,
  \ee
  multiplication by  $P_0(\JJ) \in \AA_P$ corresponds to the Eulerian operator
  \be
    \sum_{k=1}^\infty k P_k \frac{\partial}{\partial P_k} = \sum_{i=1}^\infty x_i \frac{\partial}{\partial x_i},
  \ee
  while multiplication by $P_1(\JJ) \in \AA_P$ corresponds to the cut-and-join operator of~\cite{G, GGN1, GGN2, Ka},
  \be
    \frac{1}{2} \sum_{i,j=1}^\infty \left((i+j) P_i P_j \frac{\partial}{\partial P_{i+j}} + i j P_{i+j} \frac{\partial^2}{\partial P_i \partial P_j }\right).
  \ee
\er

\br
  Alternatively, the homomorphism $\II \colon \AA_P \to \hat{C}$ may be defined by
  \be
    e^{\sum_{i=0}^\infty \theta_i P_i}
      \mapsto e^{
        \sum_{j\in \Zb}^\infty \big(
        \sum_{i=0}^\infty \theta_i
        \sum_{k=1}^j k^i \big) \no{\psi_j \psi_j^\dag}},
  \ee
  where the sum $\sum_{k=1}^j k^i$
  is defined for $j \leq 0$ by interpreting it as a polynomial in $j$ of degree $i+1$.
  Thus, multiplication by $P_i(\JJ) \in \AA_P$ corresponds, on $\hat{C}$, to the operator
  \be
    \sum_{j \in \Zb}^\infty \big(
    \sum_{k=1}^j k^i \big) \no{\psi_j \psi_j^\dag}.
  \ee
\er

\section{Examples}
\label{sec:examples}

\subsection{Double Hurwitz numbers}

Following Okounkov \cite{Ok}, for a pair of parameters $(\beta, q)$, we choose
\be
  r_j = q e^{j\beta}, \qquad
  \rho_j = q^j e^{\frac{\beta}{2} j(j+1)}, \qquad
  \tilde{\rho}_j = \rho_j q^{\frac{1}{2}} e^{\frac{3\beta}{8}}.
  \label{exp_rj_rho}
\ee
(The choice $\tilde{\rho}_j$ is used in \cite{Ok}; the choice $\rho_j$ fits more naturally with the conventions of \autoref{intertwining_homomorphism}.
For $N=0$, which is the only case needed, the two $\tau$-functions coincide.
The relationship between the two for general $N$ is indicated below.)
It follows that
\be
  r_\lambda(N) = q^{\frac{1}{2} N(N-1)} e^{\frac{\beta}{6} N(N^2-1)} q^{\abs{\lambda}} e^{3\beta N \abs{\lambda}} e^{\beta \cont_\lambda},
\ee
where
\be
  \cont_\lambda
    \deq P_1 \big( \cont(\lambda) \big)
    = \sum_{(i,j) \in \lambda} (j-i) = \frac{1}{2} \sum_{i=1}^{\ell(\lambda)} \lambda_i (\lambda_i - 2i +1),
\ee
or
\be
  \tilde{r}_\lambda(N) = r_\lambda(N) q^{\frac{N}{2}} e^{\frac{\beta N}{8}}.
\ee

Defining
\be
  \hat{F}_k \deq \frac{1}{k} \sum_{j \in \Zb} (j +1/2)^k \no{\psi_j \psi_j^\dag} \quad  \text{for  } k\in \Nb^+ \
  \text{ and}\  \hat{N} \deq \sum_{j \in \Zb} \no{\psi_j \psi^\dag_j},
  \ee
the convolution symmetry elements corresponding to $\rho$ and $\tilde{\rho}$ are
\be
  \hat{C}_\rho = q^{\hat{F}_1}  e^{\beta \hat{F}_2}  q^{-\frac{1}{2} \hat{N}} e^{-\frac{3\beta}{8} \hat{N}}, \quad 
    \hat{C}_{\tilde{\rho}} =  q^{\hat{F}_1}  e^{\beta \hat{F}_2} .
\ee
The function $\rho(z)$ with which the convolution product is taken has Fourier coefficients given by  (\ref{exp_rj_rho}).
Using the monomial basis $\{e_j := z^{-j-1}\}_{j\in \Zb}$ and summing, we obtain an elliptic $\Theta$-function
\be
\rho(z) = 
  \sum_{j \in \Zb}z^{-j-1} e^{j \ln q + \frac{\beta}{2} j(j+1)} 
   = {1\over q}\Theta\left( {1\over 2\pi i} \ln\left({z\over q}\right) +{\tau\over 2} \ \big{|} \ \tau \right),
    \ee
    where the modular parameter is
    \be
    \tau := {\beta \over 2 \pi i}.
    \ee
    
Under the homomorphism \eqref{A_P_homomorphism_finite}, the element of $\AA_P$ mapping to $\hat{C}_\rho$ is thus
\be\label{double_hurwitz_twist}
  \II \colon e^{\ln q \, P_0 + \beta P_1} \mapsto \hat{C}_\rho.
\ee
The corresponding 2D Toda $\tau$-functions are
\begin{align}
  \tau^{\text{2D Toda}}_{C_\rho}(N, \tb, \sb)
    &= \bra{N} \hat{\gamma}_+ (\tb) \hat{C}_\rho \hat{\gamma}(\sb) \ket{N} \\
  \tau^{\text{2D Toda}}_{C_{\tilde{\rho}}}(N, \tb, \sb)
    &= \bra{N} \hat{\gamma}_+ (\tb) \hat{C}_{\tilde{\rho}} \hat{\gamma}(\sb) \ket{N}
     = q^{\frac{N}{2}} e^{\frac{\beta N}{8}} \tau^{\text{2D Toda}}_{C_\rho}(N, \tb, \sb).
\end{align}
The generating function for the double Hurwitz numbers \cite{Ok}
\be
  F_{C_\rho} (\tb, \sb)
    = \sum_{n=1}^\infty q^n
      \sum_{b=0}^\infty \frac{\beta^b}{b!}
      \sum_{\mathclap{\substack{\lambda,\, \mu \\ \abs{\lambda}=\abs{\mu}=n}}}
        \operatorname{Hur}_{b} (\lambda, \mu) \, P_\lambda(\tb) P_\mu(\sb),
\ee
which counts only connected branched coverings of $\Cb\Pb^1$, is then the logarithm
\be
  F_{C_\rho} (\tb, \sb)
    = \ln\paren*{ \tau^{\text{2D Toda}}_{C_\rho}(0, \tb, \sb) }
\ee
of
\be
  \tau^{\text{2D Toda}}_{C_\rho}(0, \tb, \sb)
    = \sum_{n=1}^\infty q^n
      \sum_{b=0}^\infty \frac{\beta^b}{b!}
      \sum_{\mathclap{\substack{\lambda,\, \mu \\ \abs{\lambda}=\abs{\mu}=n}}}
        \operatorname{Cov}_{b} (\lambda, \mu) \, P_\lambda(\tb) P_\mu(\sb),
\ee
where $n$ is the number of sheets in the covering, $b$ is the number of simple branch points in the base, $\lambda$ and $\mu$ are the ramification types at $0$ and $\infty$, and $\operatorname{Cov}_{b} (\lambda, \mu)$ is the total number of such coverings.

\subsection{Monotone double Hurwitz numbers}

Consider the Harish-Chandra--Itzykson--Zuber (HCIZ) integral
\be
  \II_N(z, A, B)
    = \int_{\mathrlap{U \in U(N)}} \, e^{-zN \tr(UAU^\dag B)} d\mu(U)
    = \left(\prod_{k=0}^{N-1} k!\right) \frac{\det\big(e^{-zNa_i b_j}\big)_{1 \leq i,j \leq N}}{\Delta({\bf a}) \Delta ({\bf b})}
    \label{HCIZ}
\ee
where $d\mu(U)$ is the Haar measure on $U(N)$, $A$ and $B$ are a pair of
diagonal matrices with eigenvalues ${\bf a} = (a_1, \ldots, a_N)$, ${\bf b} = (b_1, \ldots, b_N)$ respectively, and $\Delta({\bf a})$, $\Delta({\bf b})$ are the Vandermonde determinants.
Defining
\be
  \FF_N \deq \frac{1}{N^2} \, \ln\big(\II_N(z, A, B)\big),
\ee
it was shown in \cite{GGN1} that this admits an expansion
\be
  \FF_N
    = \sum_{g=0}^\infty
      \sum_{n=0}^\infty \frac{(-z)^n}{n!}
      \sum_{\mathclap{\substack{\lambda,\, \mu \\ \abs{\lambda}=\abs{\mu}=n \\ \ell(\lambda),\, \ell(\mu) \leq N}}}
        \vec{H}_g(\lambda, \mu) \frac{P_\lambda([A]) P_\mu([B])}{(-N)^{2g + \ell(\lambda) + \ell(\mu)}},
\ee
where $\vec{H}_g(\lambda, \mu)$ is a monotone double Hurwitz number, which equals the number of transitive $r$-step monotone walks in the Cayley graph of $S_n$ from a permutation with cycle type $\lambda$ to one with cycle type $\mu$ and
\be
  r = 2g - 2 + \ell(\lambda) + \ell(\mu).
\ee
It is also well-known that the HCIZ integral $\II_N(z, A, B)$ is, within the normalization factor
\be
  r^{\exp}_0(N) \deq \frac{1}{\prod_{k=0}^{N-1} k!},
\ee
equal to the 2D Toda $\tau$-function $\tau^{\text{HCIZ}}(N, \tb, \sb)$ with double Schur function expansion 
\cite[Appendix~A]{HO3}
\be
  r^{\exp}_0(N) \II_N(z, A, B)
    = \tau^{\text{HCIZ}}(N, \tb, \sb)
    \deq \sum_{\mathclap{\substack{\lambda \\ \ell(\lambda) \leq N}}}
      r_\lambda^{\exp}(N) S_\lambda(\tb) S_\lambda(\sb),
\ee
where
\begin{align}
  r_\lambda^{\exp}(N) = \frac{(-zN)^{\abs{\lambda}}}{\left(\prod_{k=0}^{N-1}k!\right) N_{(\lambda)}}, &&
  (N)_{\lambda} = \prod_{(i,j) \in \lambda} (N+j-i),
\end{align}
evaluated at the parameter values
\be\label{t_s_A_B}
  \tb = [A], \qquad
  \sb = [B].
\ee
This may be expressed as the fermionic vacuum state expectation value
\be
  \tau_{C_{\exp}}(N, [A], [B])
    = \bra{N} \hat{\gamma}_+([A]) \hat{C}_{\exp} \hat{\gamma}_-([B]) \ket{N}
    = \bra{0} \hat{\gamma}_+([A]) \hat{C}_{\exp}(N) \hat{\gamma}_-([B]) \ket{0}
\ee
where
\begin{align}
  \hat{C}_{\exp} &\deq e^{\sum_{j=0}^\infty (j \ln(-zN) - \ln(j!)) \no{\psi_j \psi_j^\dag}} \\
  \hat{C}_{\exp}(N) &\deq e^{\sum_{j=-N}^\infty((j+N) \ln(-zN) - \ln((j+N)!)) \no{\psi_j \psi_j^\dag}}.
\end{align}
The convolution group element $\hat{C}_{\exp}(N)$ is the image, under the homomorphism $\II \colon \AA_P \to \hat{C}$, of the element
\begin{gather}
  \label{monotone_hurwitz_twist}
  e^{-\ln(-zN) P_0(\JJ)} e^{\sum_{i=1}^\infty \frac{(-1/N)^i}{i} P_i(\JJ)}
    = e^{-\ln(-zN) P_0(\JJ)} H(-1/N, \JJ) \\
  \II\Big(e^{-\ln(-zN) P_0(\JJ)} H(-1/N, \JJ) e^{\ln q \, P_0(\JJ) + \beta P_1(\JJ)} \Big) = \hat{C}_{\exp}(N).
\end{gather}

\subsection{Mixed double Hurwitz numbers}

The mixed monotone Hurwitz numbers are defined in \cite{GGN2} as the number of $r$-step walks in the Cayley graph of $S_n$ from a permutation with cycle type $\lambda$ to one with cycle type $\mu$, subject to the restriction that the first $p \leq r$ steps form a weakly monotone walk, and the last $r-p$ steps are unrestricted.
This case has a generating function that is obtained by composing the group element \eqref{double_hurwitz_twist} in $\AA_P$ corresponding to the ordinary double Hurwitz numbers with the one \eqref{monotone_hurwitz_twist} corresponding to the monotone ones.
Applying the homomorphism $\AA_P$ to the product therefore gives the product of the convolution group elements
\be
  \II \Big( e^{\ln q \, P_0(\JJ) + \beta P_1(\JJ)} e^{-\ln(-zN) P_0(\JJ)} H(-1/N, \JJ) \Big)
    = e^{\ln q \, \hat{F}_1 + \beta \hat{F}_2} \hat{C}_{\exp}(N).
\ee
It follows that the factor $r_\lambda(N)$ that enters in the double Schur function expansion of the corresponding mixed double Hurwitz number generating function is given by the product of the ones for these two cases,
\be
  r_\lambda(N)
    = q^{\frac{1}{2} N(N-1)} e^{\frac{\beta}{6} N(N^2-1)} q^{\abs{\lambda}} e^{\beta N \abs{\lambda}} e^{\beta \cont_\lambda}
      \, \frac{(-zN)^{\abs{\lambda}}}{\left(\prod_{k=0}^{N-1}k!\right) N_{(\lambda)}}.
\ee

\subsection{Determinantal matrix integrals as generating functions}

Following \cite[Appendix~A]{HO3}, we can obtain a new class of combinatorial generating functions that generalize the case of the HCIZ integral as follows.
Choose a pair $(\alpha, q)$ of (real or complex) parameters, with $\alpha$ not a positive integer, and define
\be
  \rho_j^{(\alpha, q)} \deq \begin{cases}
    \frac{q^j (1-\alpha)_j}{j!} &\text{if $j \geq 1$,} \\
    \hspace{1.4em} 1 &\text{if $j \leq 0$,}
  \end{cases}
\ee
where
\be
  (a)_j \deq a (a+1) \cdots (a+j-1)
\ee
is the (rising) Pochhammer symbol.
Then
\be
  r_j^{(\alpha, q)}
    \deq \frac{\rho_j^{(\alpha,q)}}{\rho_{j-1}^{(\alpha,q)}}
    = \begin{cases}
        \frac{q(j-\alpha)}{j} &\text{if $j \geq 1$,} \\
        \quad 1 &\text{if $j \leq 0$,}
      \end{cases}
\ee
and
\be
  T_j^{(\alpha,q)}
    \deq \ln \rho_j^{(\alpha,q)}
    = \begin{cases}
        j\ln q + \ln(1-\alpha)_j - \ln (j!) &\text{if $j \geq 1$,} \\
        \hspace{5.5em} 0 &\text{if $j \leq 0$.}
\end{cases}
\ee
For any $N \in \Nb$, let
\be
  \hat{C}_{(\alpha, q)}(N) \deq e^{\sum_{j=-N}^\infty T^{(\alpha, q)}_{j+N} \no{\psi_j \psi_j^\dag}}
\ee
be the corresponding shifted convolution symmetry group element.

We then have, for $\ell(\lambda) \leq N$,
\be\label{C_alpha_rho_eigenvector}
  \hat{C}_{(\alpha, q)}(N) \ket{\lambda;0} = r_\lambda^{(\alpha, q)}(N) \ket{\lambda; 0}
\ee
where
\be
  r_\lambda^{(\alpha, q)}(N)
    = r_0^{(\alpha, q)}(N) \prod_{(i,j) \in \lambda} r_{N+j-i}^{(\alpha,q)}
    = r_0^{(\alpha, q)}(N) q^{\abs{\lambda}} \frac{(N-\alpha)_\lambda}{(N)_\lambda},
\ee
with
\be
  r_0^{(\alpha, q)}(N)
    \deq \prod_{j=0}^{N-1} \rho^{(\alpha, q)}_j
    = q^{\frac{1}{2} N(N-1)} \prod_{j=0}^{N-1} \frac{(1-\alpha)_j}{j!}
\ee
and
\be
  (a)_\lambda \deq \prod_{i=1}^{\ell(\lambda)} (a-i+1)_{\lambda_i}
\ee
the extended Pochhammer symbol corresponding to the partition $\lambda=(\lambda_1, \ldots, \lambda_{\ell(\lambda)})$.

For $N \in \Nb^+$, we have the 2D Toda chain of $\tau$-functions
\be
  \tau_{\scriptscriptstyle C_{(\alpha, q)}} (N, \tb, \sb)
    = \bra{0} \hat{\gamma}_+(\tb) \hat{C}_{(\alpha, q)}(N) \gamma_-(\sb) \ket{0}
    = \sum_\lambda r_\lambda^{(\alpha, q)}(N) S_\lambda(\tb) S_\lambda (\sb),
\ee
evaluated at the parameter values \eqref{t_s_A_B}.
As shown in \cite{HO3}, this is just the matrix integral
\begin{align}\label{det_matrix_integral}
  \tau_{\scriptscriptstyle C_{(\alpha, q)}} (N, [A] ,[B])
    &= r_0^{(\alpha, q)}(N) \int_{\mathrlap{U \in U(N)}} \,\, \det(\Ib_N - qUAU^\dag B)^{\alpha - N} d\mu(U) \\
    &= \frac{\big(\det(1 - q a_i b_j)_{1 \leq i,j \leq N}\big)^{\alpha-1}}{\Delta({\bf a}) \Delta ({\bf b})}.
\end{align}

We now use the other construction to derive an interpretation of this as a combinatorial generating function.
Evaluating the generating function for the elementary symmetric polynomials at the Jucys-Murphy elements
\be
  E(w, \JJ) \deq \prod_{a=1}^n (1 + w \JJ_a)
\ee
defines an element of $\AA_P$.
Applying the product
\be
  \left(-\frac{qz}{w}\right)^{P_0} H(z, \JJ) E(w, \JJ)
\ee
to the orthogonal idempotent $F_\lambda$, we obtain
\be
  \left(-\frac{qz}{w}\right)^{P_0} H(z, \JJ) E(w, \JJ) F_\lambda
    = q^{\abs{\lambda}} \frac{(1/w)_\lambda}{(-1/z)_\lambda} F_\lambda.
\ee
Specializing to the values
\be
  z = -1/N, \qquad
  w = \frac{1}{N-\alpha}, \qquad
  \tb = [A], \qquad
  \sb = [B]
\ee
and choosing $\ell(\lambda) \leq N$, we obtain the same eigenvalue, within a normalization factor, as in \eqref{C_alpha_rho_eigenvector}, namely
\be
  \left(q\left(\frac{\alpha}{N}-1\right)\right)^{P_0} H(-1/N, \JJ) E(-1/(N-\alpha), \JJ) F_\lambda
    = q^{\abs{\lambda}} \frac{(N-\alpha)_\lambda}{(N)_\lambda} F_\lambda
    = \frac{r_\lambda^{(\alpha, q)}(N)}{r_0^{(\alpha, q)}(N)} F_\lambda.
\ee
Under the homomorphism $\II \colon \AA_P \to \hat{C}$, we thus have
\be
  \left(q\left(\frac{\alpha}{N}-1\right)\right)^{P_0} H(-1/N, \JJ) E(-1/( N-\alpha), \JJ)
    \mapsto \frac{\hat{C}_{(\alpha, q)}(N)}{r^{(\alpha, q)}_0(N)}.
\ee
Applying the product $q^{P_0} H(z, \JJ) E(w, \JJ)$ to the conjugacy class sum $C_\lambda$ therefore gives
\be
  q^{P_0} H(z, \JJ) E(w, \JJ) C_\lambda
    = \sum_{k,\, l=0}^\infty z^k w^l
      \sum_\mu q^{\abs{\lambda}} E_{k, l}(\lambda, \mu) C_\mu,
\ee
where, similarly to the mixed double Hurwitz numbers, $E_{k, l}(\lambda, \mu)$ is the number of $(k+l)$-step walks in the Cayley graph of $S_n$ starting at a permutation with cycle type $\lambda$ and ending at a permutation of cycle type $\mu$ which obey the condition that the first $k$ steps form a weakly monotone walk, and the last $l$ steps form a strictly monotone walk.

Applying the map $\ch \otimes \ch$ to $\sum_{g \in S_n} n! \, g \otimes \big( q^{P_0}H(z, \JJ) E(w, \JJ) g \big)$ gives
\begin{align}
  \sum_{g \in S_n} n! \, g \otimes \big( q^{P_0}H(z, \JJ) E(w, \JJ) g \big)
    &\mapsto
      \sum_{k,\, l = 0}^\infty z^k w^l
      \sum_{\mathclap{\abs{\lambda} = \abs{\mu} = n}}
        q^{\abs{\lambda}} E_{k, l}(\lambda, \mu) P_\lambda(\tb) P_\mu(\sb) \notag\\
    &=
      \sum_{\abs{\lambda}=n}
        q^{\abs{\lambda}} r_\lambda(z,w) S_\lambda(\tb) S_\lambda(\sb),
\end{align}
where
\be
  r_\lambda(z,w) \deq \prod_{(i,j) \in \lambda} \frac{1 + (j-i)w}{1 - (j-i)z}
\ee
and hence
\be
  \left(q\left(\frac{\alpha}{N} -1\right)\right)^{\abs{\lambda}} r_\lambda \paren*{-\frac{1}{N}, \frac{1}{N-\alpha}}
    = \frac{r_\lambda^{(\alpha, q)}(N)}{r_0^{(\alpha, q)}(N)}.
\ee
Therefore, in the limit $N \to \infty$, the matrix integral \eqref{det_matrix_integral} is the generating function for the number of weakly-monotonic-then-strictly-monotonic double Hurwitz numbers.

\subsection{A further example: multimonotone paths}

In a recent paper by Alexandrov {\em et al} \cite{AMMN}, a further class of functions, denoted  
$Z_{(k,m)}(s, u_1, \dots , u_m |  {\bf p}^{(1)}, \dots, {\bf p}^{(k)})$, with structure similar  to the 
generating function for Hurwitz numbers was studied.
  These depend on a set of $m+1$ parameters $(s, u_1, \dots , u_m)$,  and are expressible as  sums over  $k$-fold products of Schur functions $\prod_{i=1}^kS_\lambda ( {\bf p}^{(i)}) $, whose coefficients are products of functions of the individual parameters $(s, u_1, \dots, u_m)$,  which are themselves content products  of the  type (\ref{G_content_product}).  
  For $k=1$ or $2$,  it follows from their definition that these are  KP and 2D Toda $\tau$-functions of 
  hypergeometric type;  for  $k>2$ they have no such interpretation.
  
The $k=2$ case is defined by the double Schur function expansion 
\be
Z_{(2,m)} (s, u_1, \dots , u_m |  {\bf p}^{(1)},  {\bf p}^{(2)} ) =\sum_\lambda r_\lambda^{(s, u_1, \dots , u_m)}
S_\lambda({\bf p}^{(1)} )S_\lambda({\bf p}^{(2)}), \ee
where
\be
r_\lambda^{(s, u_1, \dots , u_m)} := s^{|\lambda|} \prod_{\alpha=1}^m  \prod_{(ij) \in \lambda} (u_\alpha+ i-j). 
\ee
and the  $k=1$ case is obtained by setting ${\bf p}^{(2)} = (1, 0, 0, \dots )$.
 Although similar in form to the simple and  double Hurwitz number generating functions, no combinatorial
 interpretation  of  these  was given in \cite{AMMN}.  The case $Z_{(2,1)}$ is just the well-known  
 HCIZ  integral (\ref{HCIZ})  treated above when evaluated at the special values (\ref{t_s_A_B}). 

The combinatorial significance of $Z_{(2,m)}$ for all $m\in \Nb^+$ is very easily understood  in  our approach.
To express this example in the notational  conventions  above, it is  convenient to define slightly different expansion
 parameters
 \be
   q := (-1)^m s \prod_{\alpha=1}^m u_a,  \quad w_\alpha := -1/u_\alpha, \quad \alpha= 1, \dots , m, 
 \ee
 and denote
 \be
\tilde{Z}_{(2, m)}(q, w_1, \dots , w_m |  {\bf p}^{(1)},  {\bf p}^{(2)} ) 
:= Z_{(2, m)}(s, u_1, \dots , u_m |  {\bf p}^{(1)},  {\bf p}^{(2)}).  
 \ee
The diagonal double Schur function expansions  for $\tilde{Z}_{(2, m)}(q, w_1, \dots , w_m |  {\bf p}^{(1)},  {\bf p}^{(2)})$
may then  be re-expressed as a double series over  products  $P_\lambda({\bf p}^{(1)})P_\mu({\bf p}^{(2)})$
of  power sum symmetric functions  via the Frobenius character formula (\ref{Frobenius_formula}), 
and  further developed as multiple Taylor series in the variables $(p, w_1, \dots , w_m)$:
 \be
 \tilde{Z}_{(2, m)}(q, w_1, \dots , w_m |  {\bf p}^{(1)},  {\bf p}^{(2)}) = 
 \sum_{\mathclap{\mathclap{n=0}}}^\infty q^n 
  \sum_{{\lambda, \mu \atop\abs{\lambda} = \abs{\mu} = n}}
  \sum_{{d_1, d_2, \dots d_m = 0}}^\infty \left(  \prod_{\alpha=1}^m w_\alpha^{d_\alpha}   \right)    E^{(n, d_1, \dots d_m)}(\lambda, \mu)    P_\lambda( {\bf p}^{(1)}) P_\mu( {\bf p}^{(2)}).
  \label{multi_monotone_generating_function}
 \ee
 
The coefficients  $E^{(n, d_1, \dots d_m)}(\lambda, \mu)$   in this series
 have a simple combinatorial meaning. They are the number of paths in the Cayley graph 
 of $S_n$  generated by transpositions $(ab)$, $a<b$,   starting  from an element in the  class sum $C_\lambda$ and ending at one in the class sum  $C_\mu$, related  by multiplication by   a  product of transpositions of the form  
 \be
 (a_1b_1) \cdots (a_d  b_d ), \quad d:= \sum_{\alpha=1}^m d_i, 
 \ee
   in which the $b_i$'s are strictly  monotonically increasing within each successive segment  of length $d_i$,  
   starting at $(a_1 b_1)$.  
   
To see this, just note that the reparametrized content product 
   $\tilde{r}_\lambda^{(q, w_1, \dots , w_m)}$ appearing in the diagonal double Schur function expansion
\be
\tilde{Z}_{(2,n)}(q, w_1, \dots , w_m |  {\bf p}^{(i)}) = \sum_\lambda \tilde{r}_\lambda^{(q, w_1, \dots , w_m)}
S_\lambda({\bf p}^{(1)} )S_\lambda({\bf p}^{(2)}), 
\ee
is 
\be
\tilde{r}_\lambda^{(q, w_1, \dots , w_m)} = q^{|\lambda|}\prod_{a=1}^m  \prod_{(ij) \in \lambda} (1+w_\alpha (j-i)). 
\label{multi_monotone_eigenvalue}
\ee
From the discussion in \autoref{sec:generating_functions},  this is just the product of  the eigenvalues of the generating
functions of the elementary symmetric functions,  expressed in terms of the Jucys-Murphy elements, 
\be
E(w_\alpha, \JJ)= \prod_{a=1}^n(1+w_\alpha \JJ_a), 
\ee
and each of  these generates strictly monotonic paths. The element $G(\JJ)  \in \Lambda[P_0]$  used to define 
the  ``twist'' in this case  is therefore the product
$q^{P_0(\JJ)}\prod_{\alpha=1}^m E(w_\alpha, \JJ)$,
whose eigenvalues $\tilde{r}_\lambda^{(q, w_1, \dots , w_m)} $  in the $F_\lambda$ basis
\be
q^{P_0(\JJ)}\prod_{\alpha=1}^m E(w_\alpha, \JJ) F_\lambda =  \tilde{r}_\lambda^{(q, w_1, \dots , w_m)} F_\lambda
\ee
 are given by (\ref{multi_monotone_eigenvalue}).
The multimonotone Cayley path interpretation follows from the discussion of 
the last example in \autoref{sec:generating_functions}.

 \bigskip
\noindent {\em  Bibliographical update.}  This paper was  posted as  \arxiv{1405.6303} in May 2014
and submitted at that time to {\em Lett.~Math.~Phys.~}for publication. No substantive changes have been made 
since then,  but in the intervening time several further papers have appeared on related matters
 \cite{AMMN, H1, H2, H3, AC1, AC2,  NOr1, NOr2},  some of which have since been published.  Not all recent
 contributions give due reference to the present work,   but for the sake of  completeness, 
 we include mention of all known related works, whether they  appeared prior to,  or subsequent to the present one.

\bigskip \bigskip

\bigskip

\end{document}